\tikzstyle{tre}=[circle,draw,minimum size=2.5mm,inner sep=0.2pt]
\renewcommand{\leq}{\leqslant}
\renewcommand{\geq}{\geqslant}
\newcommand{\BT}{\mathcal{T}}
\newcommand{\LBT}{\mathcal{OT}}
\newcommand{\child}{\mathrm{child}}
\newcommand{\RR}{\mathbb{R}}
\newcommand{\ZZ}{\mathbb{Z}}
\begin{document}
\title{The  probabilities of trees and cladograms under Ford's $\alpha$-model}

\author{Tom\'as M. Coronado\and Arnau Mir\and Francesc Rossell\'o}
\institute{Balearic Islands Health Research Institute (IdISBa) and Department of Mathematics and
  Computer Science,
  University of the Balearic Islands,
  E-07122 Palma, Spain, 
\texttt{\{tomas.martinez,arnau.mir,cesc.rossello\}@uib.es}}
\maketitle

\begin{abstract}
We give correct explicit formulas for the probabilities of  rooted binary  trees and cladograms under Ford's $\alpha$-model. 
 \end{abstract}

\section{Introduction}
\noindent  The study of random growth models of rooted phylogenetic trees and the statistical properties of the shapes of the phylogenetic trees they produce was initiated almost one century ago by Yule \cite{Yule} and it has gained momentum in the last 20 years: see, for instance,  \cite{Ald1,BF,Ford1,Egui,Kirk,Po16,SV}.  The final goal of this line of research is to understand the relationship between the forces that drive evolution and the topological properties of ``real-life'' phylogenetic trees  \cite{BF,MH}; see also \cite[Chap. 33]{fel:04}. One of the most popular such models is Ford's $\alpha$-model for rooted binary  trees \cite{Ford1}, a parametric model that generalizes both the uniform model (where new leaves are added equiprobably to any arc, giving rise to the uniform probability distribution on the sets of binary rooted phylogenetic trees, or \emph{cladograms}, with a fixed set of taxa) and Yule's model (where new leaves are added equiprobably only to \emph{pendant} arcs, that is, to arcs ending in leaves) by allowing a different probability for the addition of new leaves to pendant or to {internal} (i.e., non-pendant) arcs.

When models like Ford's are used to contrast topological properties of phylogenetic trees contained in databases like  TreeBase (\url{https://treebase.org}), only
their general properties (moments, asymptotic behavior) are employed. But, in the course of a research where we have needed to compute the probabilities of several specific cladograms under this model,  we have noticed that the explicit formulas that Ford gives  in \cite[\S 3.5]{Ford1} for the probabilities of cladograms and (unlabeled rooted binary) trees (see Props.\ 29 and 32 therein, with the definition of $\widehat{q}$ given in page 30)  are wrong, failing for some trees with $n\geq 8$ leaves. 

So, to help the future user of Ford's model, in this paper we give the correct explicit formulas for these probabilities. This paper is accompanied by the GitHub page \url{https://github.com/biocom-uib/prob-alpha} where the interested reader can find a SageMath \cite{sage} module to compute these probabilities, and their 
explicit values on the sets $\BT_n$ of cladograms with $n$ leaves labeled $1,\ldots,n$, for every $n$ from 2 to 8. 

\section{Preliminaries}
\subsection{Definitions, notations, and conventions}
\noindent Throughout this paper, by a \emph{tree} $T$ we mean a rooted binary tree. We shall denote its root and  its set of  \emph{internal nodes} (i.e., those nodes that are not leaves) by  $r_T$  and $V_{int}(T)$, respectively. 
We understand $T$ as a directed graph, with its arcs pointing away from the root. The \emph{children} of an internal node $u$ 
are those nodes $v$ such that $(u,v)$ is an arc in $T$, and they form 
the set $\child(u)$. A node $x$ is a  \emph{descendant} of a node $v$ when there exists a directed path from $v$ to $x$ in $T$.  For every node $v$, the \emph{subtree $T_v$ of $T$ rooted at $v$} is the subgraph of $T$ induced on the set of descendants of $v$. 

A tree $T$ is \emph{ordered} when it is endowed with an \emph{ordering}  $\prec_v$ on every $\child(v)$.
An \emph{isomorphism} of ordered trees is an isomorphism of rooted trees that moreover preserves  these orderings. 
A  \emph{cladogram} (respectively, an \emph{ordered cladogram}) on a set $\Sigma$ is a tree (resp., an {ordered tree}) with its leaves bijectively labeled in $\Sigma$.  An  \emph{isomorphism} of cladograms (resp., of ordered cladograms)  is an isomorphism of trees (resp., of ordered trees)  that preserves the leaves' labels. We shall always identify a  tree, an ordered  tree, a cladogram, or an ordered cladogram, with its isomorphism class, and in particular we shall make henceforth  the abuse of language of saying that two of these objects, $T,T'$, \emph{are the same}, in symbols $T=T'$, when they  are (only) isomorphic. 
We shall denote trees and cladograms by means of their Newick format  \cite{Newick}. In this representation of unlabeled trees, we shall denote the leaves with a symbol $*$.

Let $\BT^*_n$, $\LBT^*_n$,  $\BT_n$, and  $\LBT_n$, respectively, stand for the sets of  trees with $n$ leaves, of ordered  trees with $n$ leaves, of cladograms on $[n]=\{1,2,\ldots,n\}$, and of ordered cladograms on $[n]$. 
There exist natural isomorphism-preserving forgetful mappings
\begin{center}
\begin{tikzpicture}[thick,>=stealth,scale=0.25]
\draw(0,0) node (lbt) {$\LBT_n$};   
\draw(4,-4) node (bt) {$\BT_n$}; 
\draw(-4,-4) node (lbtp) {$\LBT^*_n$}; 
\draw(0,-8) node (btp) {$\BT^*_n$};  
\draw[->]  (lbt)--(bt);
\draw(3,-1.8) node  {$\pi_o$};   
\draw(-3,-1.8) node  {$\pi_*$};   
\draw(3,-6.2) node  {$\pi$};   
\draw(-3,-6.2) node  {$\pi_{o,*}$};   
\draw[->]  (lbt)--(lbtp);
\draw[->]  (bt)--(btp);
\draw[->]  (lbtp)--(btp);
\end{tikzpicture}
\end{center}
that ``forget'' the orderings or the labels of the trees. In particular, we shall say that two cladograms \emph{have the same shape} when they have the same image under $\pi$.

Let us introduce some more notations. For every node $v$ in a tree $T$, $\kappa_T(v)$ is its number of descendant leaves.
For every internal node  $v$ in an ordered tree $T$, with children $v_1\prec_v v_2$, its \emph{numerical split} is 
the ordered pair
$NS_T(v)=(\kappa_T(v_1), \kappa_T(v_2))$. If, instead, $T$ is unordered and if $\child(v)=\{v_1,v_2\}$ with 
$\kappa_T(v_1)\leq \kappa_T(v_2)$, then  $NS_T(v)=(\kappa_T(v_1), \kappa_T(v_2))$.
In both cases, the \emph{multiset of  numerical splits} of $T$ is
$NS(T)=\{NS_T(v)\mid v\in  V_{int}(T)\}$.
A \emph{symmetric branch point} in a  tree $T$ is an internal node $v$ such that, if $v_1$ and $v_2$ are its children, then 
the subtrees $T_{v_1}$ and $T_{v_2}$ of $T$ rooted at them are isomorphic.

Given two cladograms $T$ and $T'$ on $\Sigma$ and $\Sigma'$, respectively, with $\Sigma\cap\Sigma'=\emptyset$, their \emph{root join} is   the cladogram $T\star T'$ on $\Sigma\cup\Sigma'$ obtained by connecting the roots of $T$ and $T'$ to a (new) common root $r$.
 If $T,T'$ are ordered cladograms, $T\star T'$ is ordered by inheriting the orderings on $T$ and $T'$ and ordering the children of the new root $r$ as $r_T\prec_r r_{T'}$. If $T$ and $T'$ are unlabeled trees, a similar construction yields a tree $T\star T'$;  if they are moreover ordered, then $T\star T'$ becomes an ordered tree as explained above.

\subsection{The $\alpha$-model}

\noindent Ford's $\alpha$-model \cite{Ford1} defines, for every $n\geq 1$, a family of probability density functions $P^{(*)}_{\alpha,n}$ on $\BT^*_n$ that depend on one parameter $\alpha\in [0,1]$, and then it translates this family into three other families of probability density functions
$P_{\alpha,n}$ on $\BT_n$,  $P^{(o,*)}_{\alpha,n}$ on $\LBT^*_n$, and $P^{(o)}_{\alpha,n}$ on $\LBT_n$, by imposing that the probability of a tree is equally distributed among its preimages under $\pi$, $\pi_{o,*}$, or $\pi\circ\pi_o=\pi_{o,*}\circ \pi_*$, respectively.

It is well known  \cite{CS} that every $T\in \BT_n$ can be obtained  in a unique way by adding recurrently to a single node labeled 1, new leaves labeled $2,\ldots,n$ to arcs (i.e., splitting an arc $(u,v)$ into two arcs $(u,w)$ and $(w,v)$ and then adding a new arc from the inserted node $w$ to a new leaf) or to a new root  (i.e., adding a new root $w$ and new arcs from $w$ to the old root and to a new leaf).
The value of $P^{(*)}_{\alpha,n}(T^*)$ for $T^*\in \BT_n^*$ is  determined through all possible ways of constructing dendrograms with shape $T^*$ in this way. More specifically:
\begin{enumerate}
\item Start with the tree $T_1\in \BT_1$ consisting of a single node labeled 1. Let $P'_{\alpha,1}(T_1)=1$.

\item For every $m=2,\ldots,n$, let $T_{m}\in \BT_{m}$ be obtained by adding a new leaf labeled $m$ to $T_{m-1}$. Then:
$$
\hspace*{-1ex}P'_{\alpha,m}(T_{m})=\left\{\hspace*{-1ex}\begin{array}{ll}
 \dfrac{\alpha}{m-1-\alpha}\cdot P'_{\alpha,m-1}(T_{m-1})  & \mbox{if the new leaf is added to}\\[-1ex] &\mbox{an internal arc or to a new root}\\[2ex]
  \dfrac{1-\alpha}{m-1-\alpha}\cdot P'_{\alpha,m-1}(T_{m-1}) & \mbox{if the new leaf is added to}\\[-1ex] &\mbox{a pendant arc}
\end{array}\right.
$$

\item When the desired number $n$ of leaves is reached, the probability of every tree $T_n^*\in \BT_n^*$ is defined as
$$
P^{(*)}_{\alpha,n}(T_n^*)=\sum_{\pi(T_n)=T_n^*} P'_{\alpha,n}(T_n).
$$
\end{enumerate}
Once $P^{(*)}_{\alpha,n}$ is defined on $\BT^*_n$, it is transported to $\BT_n$,  $\LBT_n^*$, and $\LBT_n$  by defining the probability of an object in one of these sets as the probability of its image in $\BT^*_n$ divided by the number of preimages of this image:
\begin{itemize}
\item For every $T\in \BT_n$, if  $\pi(T)=T^*\in \BT^*_n$ and it has  $k$ symmetric branch points,  then
\begin{equation}
P_{\alpha,n}(T)=\frac{2^k}{n!}\cdot  P^{(*)}_{\alpha,n}(T^*),
\label{s->d}
\end{equation}
because  $|\pi^{-1}(T^*)|=n!/2^k$  (see, for instance, \cite[Lem. 31]{Ford1}).

\item  For every $T_o\in \LBT_n$, if $\pi_o(T_o)=T\in \BT_n$, then 
\begin{equation}
P^{(o)}_{\alpha,n}(T_o)=\frac{1}{2^{n-1}}\cdot  P_{\alpha,n}(T),
\label{od->d}
\end{equation}
because $|\pi_o^{-1}(T)|=2^{n-1}$ ($T$ has  $2^{n-1}$ different preimages under $\pi_o$, obtained by taking all possible different combinations of orderings on the $n-1$ sets $\child(v)$, $v\in V_{int}(T^*)$).

\item  For every $T_o^*\in \LBT^*_n$, if $\pi_{o,*}(T^*_o)=T^*\in \BT^*_n$ and it has $k$ symmetric branch points, then
\begin{equation}
P^{(o,*)}_{\alpha,n}(T^*_o)=\frac{1}{2^{n-k-1}}\cdot   P^{(*)}_{\alpha,n}(T^*),
\label{s->os}
\end{equation}
because $|\pi_{o,*}^{-1}(T^*)|=2^{n-1-k}$ (from the  $2^{n-1}$ possible preimages of $T^*$  under $\pi_{o,*}$, defined by all possible different combinations of orderings on the $n-1$ sets $\child(v)$, $v\in V_{int}(T^*)$, those differing only on the orderings on the children of the $k$ symmetric branch points are actually the same ordered tree).
\end{itemize}

The family $(P^{(o,*)}_{\alpha,n})_{n}$ satisfies the following useful Markov branching recurrence (in the sense of \cite[\S 4]{Ald1}).

\begin{proposition}\label{prop:1}
Let $\Gamma_\alpha:\ZZ^+\to \RR$ be the mapping defined by $\Gamma_\alpha(1)=1$ and, for every $n\geq 2$,
$\Gamma_{\alpha}(n)=(n-1-\alpha)\cdot\Gamma_{\alpha}(n-1)$. 
For every $a,b\in\ZZ^+$, let 
$$
q_{\alpha}(a,b)=\frac{\Gamma_\alpha(a)\Gamma_\alpha(b)}{\Gamma_\alpha(a+b)}\cdot \varphi_{\alpha}(a,b),
$$
where
$$
\varphi_{\alpha}(a,b)=\frac{\alpha}{2}\binom{a+b}{a}+(1-2\alpha)\binom{a+b-2}{a-1}.
$$
Then, for every $0<m<n$ and for every $T_m^* \in \LBT_m^*$ and  $T_{n-m}^*\in \LBT_{n-m}^*$,
$$
P^{(o,*)}_{\alpha,n}(T_n^*\star T_{n-m}^*)=q_{\alpha}(m,n-m)P^{(o,*)}_{\alpha,m}(T_m^*)P^{(o,*)}_{\alpha,n-m}(T_{n-m}^*).
$$
\end{proposition}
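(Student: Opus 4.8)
The plan is to reduce everything to the construction rule for $P^{(*)}_{\alpha,n}$ on $\BT^*_n$ and then translate through the normalizing factors in \eqref{s->d}–\eqref{s->os}. First I would work at the level of $P'_{\alpha,n}$ (the probability of a \emph{dendrogram}, i.e. a sequentially-built labeled tree). Fix a tree $T^* = T^*_m \star T^*_{n-m}$ in $\BT^*_n$ and count the contributions of the sequential constructions that realize its shape. Any construction of a dendrogram of shape $T^*$ builds, interleaved in time, a dendrogram of shape $T^*_m$ on one side of the root and one of shape $T^*_{n-m}$ on the other; the root itself is created at some step, and the key observation is that once both sides are nonempty, every subsequent leaf addition is to an internal or pendant arc strictly inside one of the two sides, contributing a factor $\frac{\alpha}{k-1-\alpha}$ or $\frac{1-\alpha}{k-1-\alpha}$ that depends only on the current total size $k$ and on the local (internal vs. pendant) position — which is the same factor it would contribute in the stand-alone construction of that side. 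The only steps whose factor differs from the stand-alone picture are the few early steps governing how and when the root is born and which side leaf $1$ ends up on.

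Concretely, I would introduce a bookkeeping for the interleavings: a construction of $T^*$ of the prescribed shape corresponds to a choice of which of the final leaves $2,\dots,n$ go to the $T^*_m$-side vs. the $T^*_{n-m}$-side (respecting that leaf $1$ is on one of them), a shuffle of the two side-constructions into a single timeline, and the step at which the root arc is split off. Summing $P'_{\alpha,n}$ over all these, the size-dependent factors telescope into $\Gamma_\alpha(m)\Gamma_\alpha(n-m)/\Gamma_\alpha(n)$ after accounting for the $\binom{n-1}{m-1}$-type interleaving counts (this is where the $\Gamma_\alpha$ ratio comes from — it is exactly the ``reciprocal falling factorial'' normalization), and the remaining combinatorial factor coming from the root-creation step and the placement of leaf $1$ should collapse to $\varphi_\alpha(a,b) = \frac{\alpha}{2}\binom{a+b}{a} + (1-2\alpha)\binom{a+b-2}{a-1}$: the $\frac{\alpha}{2}\binom{a+b}{a}$ term accounts for the generic case where the root is created by splitting an internal arc (weight $\alpha$, the $\tfrac12$ from which side becomes ``left''), and the $(1-2\alpha)\binom{a+b-2}{a-1}$ correction handles the boundary cases where the root-arc split coincides with adding a new leaf to a pendant arc or to a fresh root at one of the two ends. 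Having obtained the $P'$-level identity, I would pass to $\BT^*_n$ by summing over $\pi$-fibers, then to $\LBT^*_n$ via \eqref{s->os}, checking that the symmetric-branch-point count of $T^*_m\star T^*_{n-m}$ is the sum of those of the two factors (it is, since the new root is a symmetric branch point only if the two sides are isomorphic, but that case is excluded — or rather handled — by the ordered setting, where $T^*_m \star T^*_{n-m}$ and $T^*_{n-m}\star T^*_m$ are distinct), so the powers of $2$ from \eqref{s->os} multiply correctly: $2^{-(n-k-1)} = 2^{-(m-k_1-1)}\cdot 2^{-((n-m)-k_2-1)}\cdot 2^{-1}$, and that stray factor $2^{-1}$ is exactly what converts $\frac{\alpha}{2}\binom{a+b}{a}$ bookkeeping into the stated $\varphi_\alpha$.

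The main obstacle I expect is the careful accounting of the boundary steps — precisely nailing down that the root-creation step together with the very first additions on each side contribute $\varphi_\alpha(m,n-m)$ rather than something off by a binomial coefficient or a sign. Getting the $(1-2\alpha)$ coefficient and the shifted binomial $\binom{a+b-2}{a-1}$ right requires distinguishing the subcase where a side is built starting from its own copy of leaf $1$ (so its first ``addition'' is really the root-split that attaches it) from the generic subcase, and making sure these cases are counted once and only once across all interleavings. An alternative, possibly cleaner, route that sidesteps some of this is to verify the formula by induction on $n$ using the recursive definition of $P'_{\alpha}$ directly: condition on the last leaf $n$ and on which side it is added to and whether to an internal/pendant arc, apply the inductive hypothesis to the $(n-1)$-leaf tree, and check that the resulting recurrence for $q_\alpha$ — namely a relation expressing $q_\alpha(m,n-m)$ in terms of $q_\alpha(m-1,n-m)$ and $q_\alpha(m,n-m-1)$ with coefficients built from $\frac{\alpha}{k-1-\alpha}$ and $\frac{1-\alpha}{k-1-\alpha}$ — is satisfied by the closed form given, which is a finite identity among binomial coefficients and $\Gamma_\alpha$-ratios. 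I would pursue the inductive route first and fall back on the direct interleaving count only if the induction's base-case bookkeeping (small $m$, and the symmetric $m = n-m$ check) turns out to be delicate.
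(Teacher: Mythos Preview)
The paper does not actually prove Proposition~\ref{prop:1}: immediately after stating it and equation~(\ref{explLBT}) it simply refers the reader to Lemma~27 and Proposition~28 of Ford's thesis~\cite{Ford1}. There is therefore no in-paper argument against which to compare your plan; your proposal is already doing more than the paper itself does here.

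That said, one step in your outline needs care. You propose to pass through $P^{(*)}_{\alpha,n}$ on unordered shapes and then translate to $P^{(o,*)}_{\alpha,n}$ via~\eqref{s->os}, asserting along the way that the symmetric-branch-point count of the unordered shape underlying $T^*_m\star T^*_{n-m}$ equals $k_1+k_2$, so that $2^{-(n-k-1)} = 2^{-(m-k_1-1)}\cdot 2^{-((n-m)-k_2-1)}\cdot 2^{-1}$. But if the two unordered subtrees happen to be isomorphic (which can occur when $m=n-m$), the new root is itself a symmetric branch point and $k=k_1+k_2+1$, so your stray $2^{-1}$ disappears. This is not fatal---in that case the $P^{(*)}$-level coefficient you would have derived is $q_\alpha(m,m)$ rather than $2q_\alpha(m,m)$, and the two discrepancies cancel---but it does mean there is no single clean Markov-branching identity at the $P^{(*)}$ level to pass through. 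Indeed, the paper's final Remark makes exactly this point: $P^{(*)}_{\alpha,n}$ is \emph{not} Markov branching. So your route through unordered shapes must carry a case split all the way through rather than establishing one shape-level formula and uniformly translating. Your alternative inductive route---conditioning on where leaf $n$ is attached and checking that the closed form for $q_\alpha$ satisfies the resulting recurrence---avoids this issue entirely and is the cleaner path; it is also closer in spirit to how such identities are typically verified.
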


This recurrence, together with the fact that $P^{(o,*)}_{\alpha,1}$ of a single node is $1$,  implies that, for every $T_o^*\in \LBT_n^*$,  
\begin{equation}
P^{(o,*)}_{\alpha,n}(T_o^*)=\prod_{(a,b)\in NS(T_o^*)} q_{\alpha}(a,b).
\label{explLBT}
\end{equation}
For proofs of Proposition \ref{prop:1} and equation (\ref{explLBT}), see Lemma 27 and Proposition 28 in \cite{Ford1}, respectively.

\section{Main results}

\noindent Our first result is an explicit formula for $P_{\alpha,n}(T)$, for every $n\geq 1$ and $T\in \BT_n$:
\begin{proposition}\label{prop:main}
For every $T\in \BT_n$, its probability under the $\alpha$-model is
$$
P_{\alpha,n}(T)=\frac{2^{n-1}}{n!\cdot \Gamma_{\alpha}(n)}\prod_{(a,b)\in NS(T)} \varphi_{\alpha}(a,b).
$$
\end{proposition}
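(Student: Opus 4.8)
The plan is to derive the formula for $P_{\alpha,n}(T)$ by combining three ingredients already available in the excerpt: the explicit product formula (\ref{explLBT}) for $P^{(o,*)}_{\alpha,n}$ on ordered trees, the relation (\ref{s->d}) between $P_{\alpha,n}$ on cladograms and $P^{(*)}_{\alpha,n}$ on unlabeled trees, and the relation (\ref{s->os}) between $P^{(o,*)}_{\alpha,n}$ and $P^{(*)}_{\alpha,n}$. The idea is that (\ref{explLBT}) is a clean product over numerical splits, so we should push $P_{\alpha,n}(T)$ through the forgetful maps back to an ordered unlabeled tree and expand.

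First I would fix $T\in\BT_n$, set $T^*=\pi(T)\in\BT^*_n$, and let $k$ be the number of symmetric branch points of $T^*$. Pick any ordered unlabeled tree $T^*_o\in\pi_{o,*}^{-1}(T^*)$. By (\ref{s->os}), $P^{(*)}_{\alpha,n}(T^*) = 2^{\,n-1-k}\,P^{(o,*)}_{\alpha,n}(T^*_o)$, and by (\ref{s->d}), $P_{\alpha,n}(T) = \tfrac{2^k}{n!}\,P^{(*)}_{\alpha,n}(T^*)$. Composing, the factors $2^k$ and $2^{n-1-k}$ multiply to $2^{n-1}$, so $P_{\alpha,n}(T) = \tfrac{2^{n-1}}{n!}\,P^{(o,*)}_{\alpha,n}(T^*_o)$ — note the dependence on $k$ has already cancelled, which is the point of working with ordered trees.

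Next I would apply (\ref{explLBT}) to get $P^{(o,*)}_{\alpha,n}(T^*_o) = \prod_{(a,b)\in NS(T^*_o)} q_{\alpha}(a,b)$, and substitute the definition $q_{\alpha}(a,b) = \tfrac{\Gamma_\alpha(a)\Gamma_\alpha(b)}{\Gamma_\alpha(a+b)}\,\varphi_{\alpha}(a,b)$. The product splits as $\Big(\prod_{(a,b)\in NS(T^*_o)} \tfrac{\Gamma_\alpha(a)\Gamma_\alpha(b)}{\Gamma_\alpha(a+b)}\Big)\cdot\Big(\prod_{(a,b)\in NS(T^*_o)}\varphi_{\alpha}(a,b)\Big)$. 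The second factor is exactly $\prod_{(a,b)\in NS(T)}\varphi_{\alpha}(a,b)$, since $NS(T^*_o)$, $NS(T^*)$ and $NS(T)$ all coincide as multisets (the numerical split of a node depends only on the underlying unordered tree shape, not on labels or orderings — the ordered/unordered conventions for $NS$ in the Preliminaries give the same multiset). So it remains to show the first ("telescoping") factor equals $1/\Gamma_\alpha(n)$.

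The telescoping identity is the only real computation, and it is the step I expect to be the mild obstacle — mostly bookkeeping rather than depth. The claim is
$$
\prod_{v\in V_{int}(T)} \frac{\Gamma_\alpha(\kappa_T(v_1))\,\Gamma_\alpha(\kappa_T(v_2))}{\Gamma_\alpha(\kappa_T(v))} = \frac{1}{\Gamma_\alpha(n)},
$$
where $v_1,v_2$ are the children of $v$. I would prove this by induction on $n$ (equivalently on $|V_{int}(T)|$) using the root-join decomposition $T = T'\star T''$ with $T'\in\BT_m$, $T''\in\BT_{n-m}$: the internal nodes of $T$ are $r_T$ together with those of $T'$ and $T''$, the contribution of $r_T$ is $\tfrac{\Gamma_\alpha(m)\Gamma_\alpha(n-m)}{\Gamma_\alpha(n)}$, and by the inductive hypothesis the contributions of the $T'$- and $T''$-parts are $1/\Gamma_\alpha(m)$ and $1/\Gamma_\alpha(n-m)$; multiplying gives $1/\Gamma_\alpha(n)$. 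The base case $n=1$ is the empty product $=1=1/\Gamma_\alpha(1)$. Alternatively, one can observe directly that in the full product every value $\Gamma_\alpha(\kappa_T(v))$ for $v$ internal appears once in a numerator (from $v$'s parent, or never if $v=r_T$) and once in a denominator, while each leaf contributes $\Gamma_\alpha(1)=1$ to a numerator, leaving only $1/\Gamma_\alpha(\kappa_T(r_T)) = 1/\Gamma_\alpha(n)$ uncancelled. Combining this with the previous paragraph yields the stated formula. $\qquad\square$
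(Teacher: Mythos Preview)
Your proof is correct and follows essentially the same route as the paper: pass from $T$ to an ordered unlabeled tree via (\ref{s->d}) and (\ref{s->os}) so that the powers of $2$ combine to $2^{n-1}$, apply (\ref{explLBT}), and then telescope the $\Gamma_\alpha$ factors (the paper uses exactly your ``direct cancellation'' argument). One small imprecision: $NS(T_o^*)$ and $NS(T)$ are not literally the same multiset of ordered pairs (the ordered convention need not sort the pair), so the passage from $\prod_{NS(T_o^*)}\varphi_\alpha$ to $\prod_{NS(T)}\varphi_\alpha$ actually uses the symmetry $\varphi_\alpha(a,b)=\varphi_\alpha(b,a)$, which the paper makes explicit.
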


\begin{proof}
Given $T\in \BT_n$, let $T_o$ be any ordered cladogram such that $\pi_o(T_o)=T$, and let $T_o^*=\pi_{*}(T_o)\in \LBT_n^*$ and $T^*=\pi(T)=\pi_{o,*}(T_o^*)$. If $T^*$ has $k$ symmetric branch points, then, by equations (\ref{s->d}), (\ref{s->os}) and (\ref{explLBT}), 
$$
P_{\alpha,n}(T)=\frac{2^k}{n!}\cdot P^{(*)}_{\alpha,n}(T^*)=\frac{2^k}{n!}\cdot 2^{n-k-1}\cdot P^{(o,*)}_{\alpha,n}(T_o^*)
=\frac{2^{n-1}}{n!}  \prod_{(a,b)\in NS(T_o^*)} q_{\alpha}(a,b).
$$
Now, on the one hand, it is easy to check that 
$NS(T)=\big\{(\min\{a,b\},\max\{a,b\})\mid (a,b)\in NS(T_0^*)\big\}$,
and therefore, since $q_\alpha$ is symmetric,
$$
P_{\alpha,n}(T)=\frac{2^{n-1}}{n!}  \prod_{(a,b)\in NS(T)} q_{\alpha}(a,b).
$$
It remains to simplify this product. If, for every $v\in V_{int}(T)$, we denote its children by $v_1$ and $v_2$, then
$$
\prod_{(a,b)\in NS(T)} q_{\alpha}(a,b)=
\prod_{v\in V_{int}(T)}\frac{\Gamma_{\alpha}(\kappa_T(v_1))\Gamma_{\alpha}(\kappa_T(v_2))}{\Gamma_{\alpha}(\kappa_T(v))} \varphi_{\alpha}(NS(v)).
$$
For every $v\in V_{int}(T)\setminus\{r_T\}$, the term $\Gamma_{\alpha}(\kappa_T(v))$ appears twice in this product: in the denominator of the factor corresponding to $v$ itself and in the numerator of the factor corresponding to its parent. Therefore, all terms $\Gamma_\alpha(\kappa_T(v))$ in this product cancel except $\Gamma_{\alpha}(\kappa_T(r_T))=\Gamma_\alpha(n)$ (that appears in the denominator of its factor) and  every $\Gamma_\alpha(\kappa_T(v))=\Gamma_\alpha(1)=1$ with $v$ a leaf. Thus,
$$
P_{\alpha,n}(T)=\frac{2^{n-1}}{n!}\cdot \frac{1}{\Gamma_{\alpha}(n)} \cdot
\prod_{v\in V_{int}(T)}\varphi_{\alpha}(NS(v))
$$
as we claimed.\qed
\end{proof}

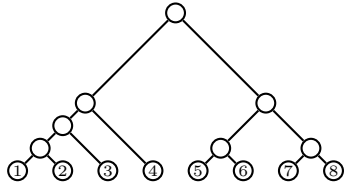
\begin{figure}[htb]
\begin{center}
\begin{tikzpicture}[thick,>=stealth,scale=0.3]
\draw(-2,0) node [tre] (0) {};   \draw (0) node {\tiny $1$};
\draw(0,0) node [tre] (1) {};     \draw (1) node {\tiny $2$};
\draw(2,0) node [tre] (2) {};    \draw (2) node {\tiny $3$};
\draw(4,0) node [tre] (3) {};    \draw (3) node {\tiny $4$};
\draw(6,0) node [tre] (4) {};    \draw (4) node {\tiny $5$};
\draw(8,0) node [tre] (5) {};    \draw (5) node {\tiny $6$};
\draw(10,0) node [tre] (6) {};    \draw (6) node {\tiny $7$};
\draw(12,0) node [tre] (7) {};    \draw (7) node {\tiny $8$};
\draw(11,1) node[tre] (a) {};
\draw(7,1) node[tre] (b) {};
\draw(9,3) node[tre] (c) {};
\draw(1,3) node[tre] (d) {};
\draw(0,2) node[tre] (e) {};
\draw(-1,1) node[tre] (f) {};
\draw(5,7) node[tre] (r) {};
\draw  (r)--(d);
\draw  (d)--(e);
\draw  (d)--(3);
\draw  (e)--(f);
\draw  (e)--(2);
\draw  (f)--(0);
\draw  (f)--(1);
\draw  (r)--(c);
\draw  (c)--(a);
\draw  (c)--(b);
\draw  (b)--(4);
\draw  (b)--(5);
\draw  (a)--(6);
\draw  (a)--(7);
\end{tikzpicture}
\caption{The cladogram with Newick string $((1,(2,(3,4))),((5,6),(7,8)));$. }
\label{tree1}
\end{center}
\end{figure}

\begin{remark}
Ford states (see \cite[Prop. 32 and page 30]{Ford1}) that if $T\in \BT_n$, then
$$
P_{\alpha,n}(T)=\frac{2^k}{n!}\prod_{(a,b)\in NS(T)} \widehat{q}_{\alpha}(a,b) 
$$
where $k$ is the number of symmetric branching points in $T$ and
$$
\widehat{q}_{\alpha}(a,b)=\left\{\begin{array}{ll}
2q_{\alpha}(a,b) & \mbox{if $a\neq b$}\\
q_{\alpha}(a,b) & \mbox{if $a=b$}
\end{array}\right. 
$$
If we simplify $\prod_{(a,b)\in NS(T)} \widehat{q}_{\alpha}(a,b)$ as in the proof of Proposition \ref{prop:main}, this formula for $P_{\alpha,n}(T)$ becomes
\begin{equation}
P_{\alpha,n}(T)=\frac{2^{k+m}}{n!  \cdot\Gamma_{\alpha}(n)}  \cdot\prod_{(a,b)\in NS(T)} \varphi_{\alpha}(a,b)
\label{eq:ford}
\end{equation}
where
$m$ is the number of internal nodes whose children have different numbers of descendant leaves.
This formula does not agree with the one given in Proposition \ref{prop:main} above, because 
$$
\begin{array}{rl}
k+m=n-1-\big|\{v\in V_{int}(T)\mid &\  \child(v)=\{v_1,v_2\}\mbox{ and } \kappa_T(v_1)= \kappa_T(v_2)\\
&   \mbox{ but }
\pi(T_{v_1})\neq \pi(T_{v_2})\}\big|
\end{array}
$$
and, hence, it may happen that $k+m<n-1$. The first example of a cladogram with this property  (and the only one, up to relabelings, with at most 8 leaves)  is the tree $\widetilde{T}\in \BT_8$ with Newick string $((1,(2,(3,4))),((5,6),(7,8)));$ depicted in Fig. \ref{tree1}. For this tree, our formula gives (see (8.22) in the document \url{https://github.com/biocom-uib/prob-alpha/blob/master/ProblsAlpha.pdf})
$$
P_{\alpha,8}(\widetilde{T})=\frac{(1-\alpha)^2(2-\alpha)}{126(7-\alpha)(6-\alpha)(5-\alpha)(3-\alpha)}
$$
while expression  (\ref{eq:ford}) assigns to $\widetilde{T}$ a probability of half this value:
\begin{equation}
\frac{(1-\alpha)^2(2-\alpha)}{252(7-\alpha)(6-\alpha)(5-\alpha)(3-\alpha)}.
\label{wrong}
\end{equation}
This last figure cannot  be right, for several reasons.
Firstly, by  \cite[\S 3.12]{Ford1}, when $\alpha=\frac{1}{2}$, Ford's model is equivalent to the uniform model, where every dendrogram in $\BT_n$ has the same probability 
has the same probability 
$$
\frac{1}{|BT_n|}=\frac{1}{(2n-3)!!}
$$
and when $\alpha=0$, Ford's model gives rise to the Yule model   \cite{Harding71,Yule}, where the probability of every $T\in\BT_n$ is
$$
P_Y(T)=\frac{2^{n-1}}{n!}\prod_{v\in V_{int}(T)}\frac{1}{\kappa_T(v)-1}.
$$
In particular, $P_{\frac{1}{2},8}(\widetilde{T})$ should be equal to $1/135135$ and $P_{0,8}(\widetilde{T})$ should be equal to $1/19845$.
Both values are consistent with our formula, while  expression (\ref{wrong}) yields half these values.

As a second reason, which can checked using a symbolic computation program, let us mention that  if we take expression (\ref{wrong}) as the probability of $\widetilde{T}$ and we assign to all other dendrograms in $\BT_8$ the probabilities computed with Proposition \ref{prop:main}, which agree on them with the values given by  equation (\ref{eq:ford}) (they are given in the aforementioned supplementary file), these probabilities do not add up 1.
\end{remark}

Now, the family of density functions $(P_{\alpha,n})_n$ satisfies the following Markov branching recurrence.

\begin{corollary}\label{cor:sm}
For every $0<m<n$ and for every $T_m\in \BT_m$ and  $T_{n-m}\in \BT_{n-m}$,
$$
P_{\alpha,n}(T_m\star T_{n-m})=\frac{2q_{\alpha}(m,n-m)}{\binom{n}{m}}P_{\alpha,m}(T_m)P_{\alpha,n-m}(T_{n-m}).
$$
\end{corollary}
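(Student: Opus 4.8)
The plan is to derive the recurrence directly from the closed formula of Proposition~\ref{prop:main}, rather than from the Markov branching recurrence for $P^{(o,*)}_{\alpha,n}$ in Proposition~\ref{prop:1}; going through $P^{(o,*)}_{\alpha,n}$ and the maps $\pi,\pi_{o,*}$ would force us to keep track of symmetric branch points and the factors $2^k$ and $2^{n-k-1}$, whereas $P_{\alpha,n}$ already has a clean shape-intrinsic expression.

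First I would record how the multiset of numerical splits behaves under root join. Writing $T=T_m\star T_{n-m}$, the internal nodes of $T$ are exactly the new root $r$ together with $V_{int}(T_m)\cup V_{int}(T_{n-m})$; the children of $r$ are $r_{T_m}$ and $r_{T_{n-m}}$, so $\kappa_T$ restricted to either side agrees with $\kappa_{T_m}$ or $\kappa_{T_{n-m}}$, the numerical split at every old internal node is unchanged, and $NS_T(r)=(\min\{m,n-m\},\max\{m,n-m\})$. Hence, as multisets,
$$
NS(T_m\star T_{n-m})=NS(T_m)\uplus NS(T_{n-m})\uplus\{(\min\{m,n-m\},\max\{m,n-m\})\}.
$$

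Next, since $\varphi_\alpha(a,b)$ is symmetric in $a$ and $b$ (both binomial coefficients occurring in it are invariant under swapping $a$ and $b$), the product $\prod_{(a,b)\in NS(T)}\varphi_\alpha(a,b)$ factors as $\varphi_\alpha(m,n-m)\cdot\prod_{(a,b)\in NS(T_m)}\varphi_\alpha(a,b)\cdot\prod_{(a,b)\in NS(T_{n-m})}\varphi_\alpha(a,b)$. I would substitute this into the formula of Proposition~\ref{prop:main} for $P_{\alpha,n}(T_m\star T_{n-m})$, and use the same proposition to rewrite $\prod_{(a,b)\in NS(T_m)}\varphi_\alpha(a,b)=\frac{m!\,\Gamma_\alpha(m)}{2^{m-1}}P_{\alpha,m}(T_m)$ and likewise for $T_{n-m}$. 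Everything then reduces to collecting the normalizing constant: the powers of $2$ contribute $2^{(n-1)-(m-1)-(n-m-1)}=2$, the factorials contribute $\frac{m!\,(n-m)!}{n!}=\binom{n}{m}^{-1}$, and the remaining factor $\frac{\Gamma_\alpha(m)\Gamma_\alpha(n-m)}{\Gamma_\alpha(n)}$ combines with $\varphi_\alpha(m,n-m)$ to give precisely $q_\alpha(m,n-m)$, since $m+(n-m)=n$. This is the claimed identity.

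I do not expect any genuine obstacle: once the decomposition of $NS(\cdot)$ under $\star$ is in place, the rest is bookkeeping of the normalizing constant. The only points requiring a little care are the arithmetic of the exponents of $2$ and of the factorials, and the observation that feeding the pair $(m,n-m)$ to $\varphi_\alpha$ (equivalently to $q_\alpha$) in either order is harmless by symmetry, so the $\min/\max$ appearing in $NS_T(r)$ can be dropped.
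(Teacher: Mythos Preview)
Your proposal is correct and follows essentially the same route as the paper's own proof: both derive the recurrence directly from the closed formula of Proposition~\ref{prop:main}, using the decomposition $NS(T_m\star T_{n-m})=NS(T_m)\uplus NS(T_{n-m})\uplus\{(m,n-m)\}$ (up to ordering) and then collecting the powers of $2$, the factorials, and the $\Gamma_\alpha$ factors into $2q_\alpha(m,n-m)/\binom{n}{m}$. Your write-up is in fact slightly more explicit than the paper's about the symmetry of $\varphi_\alpha$ and the arithmetic of the exponents, but the argument is the same.
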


\begin{proof}
If $T_m\in \BT_m$ and  $T_{n-m}\in \BT_{n-m}$, then
$$
\begin{array}{l}
\displaystyle P_{\alpha,m}(T_m)=\frac{2^{m-1}}{m!\Gamma_\alpha(m)}\prod_{(a,b)\in NS(T_m)}\varphi_{\alpha}(a,b)\\[1ex]
\displaystyle P_{\alpha,n-m}(T_{n-m})=\frac{2^{n-m-1}}{(n-m)!\Gamma_\alpha(n-m)}\prod_{(a,b)\in NS(T_{n-m})}\varphi_{\alpha}(a,b)
\end{array}
$$
and
$$
\begin{array}{l}
P_{\alpha,n}(T_m\star T_{n-m}) \displaystyle =\frac{2^{n-1}}{n!\Gamma_\alpha(n)}\prod_{(a,b)\in NS(T_m\star T_{n-m})}\varphi_{\alpha}(a,b)\\
\qquad \displaystyle =\frac{2^{n-1}}{n!\Gamma_\alpha(n)}\varphi_{\alpha}(m,n-m)\Big(\prod_{(a,b)\in NS(T_m)}\varphi_{\alpha}(a,b)\Big)\Big(\prod_{(a,b)\in NS(T_{n-m})}\varphi_{\alpha}(a,b)\Big)\\
\qquad  \displaystyle =\frac{2^{n-1}}{n!\Gamma_\alpha(n)}\varphi_{\alpha}(m,n-m)\frac{m!\Gamma_\alpha(m)}{2^{m-1}}P_{\alpha,m}(T_m)\frac{(n-m)!\Gamma_\alpha(n-m)}{2^{n-m-1}}P_{\alpha,n-m}(T_{n-m})
\\
\qquad  \displaystyle = \frac{2q_{\alpha}(m,n-m)}{\binom{n}{m}}P_{\alpha,m}(T_m)P_{\alpha,n-m}(T_{n-m})
\end{array}
$$
as we claimed. \qed\end{proof}

Combining Proposition \ref{prop:main} and equation (\ref{s->d}) we obtain the following result:

\begin{corollary}\label{cor:shape}
For every $T^*\in \BT_n^*$  with $k$ symmetric branch points, 
$$
P^{(*)}_{\alpha,n}(T^*)=\frac{2^{n-k-1}}{\Gamma_{\alpha}(n)}\prod_{(a,b)\in NS(T^*)} \varphi_{\alpha}(a,b).
$$
\end{corollary}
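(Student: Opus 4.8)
The plan is to combine Proposition \ref{prop:main} with equation (\ref{s->d}) directly, solving for $P^{(*)}_{\alpha,n}(T^*)$. Equation (\ref{s->d}) states that if $T\in\BT_n$ has image $\pi(T)=T^*$ with $k$ symmetric branch points, then $P_{\alpha,n}(T)=\tfrac{2^k}{n!}P^{(*)}_{\alpha,n}(T^*)$; rearranging gives $P^{(*)}_{\alpha,n}(T^*)=\tfrac{n!}{2^k}P_{\alpha,n}(T)$. So the whole argument is: pick any cladogram $T$ in the fiber $\pi^{-1}(T^*)$, substitute the formula from Proposition \ref{prop:main} for $P_{\alpha,n}(T)$, and simplify.

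First I would note that such a $T$ exists (the fiber is nonempty, in fact of size $n!/2^k$ by \cite[Lem. 31]{Ford1}). Then I would substitute:
$$
P^{(*)}_{\alpha,n}(T^*)=\frac{n!}{2^k}\cdot P_{\alpha,n}(T)=\frac{n!}{2^k}\cdot\frac{2^{n-1}}{n!\,\Gamma_\alpha(n)}\prod_{(a,b)\in NS(T)}\varphi_\alpha(a,b)=\frac{2^{n-k-1}}{\Gamma_\alpha(n)}\prod_{(a,b)\in NS(T)}\varphi_\alpha(a,b).
$$
The $n!$ cancels and $2^{n-1}/2^k=2^{n-k-1}$, which is exactly the claimed prefactor.

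The one point needing a word of justification is that $NS(T)=NS(T^*)$, so that the product over $NS(T)$ equals the product over $NS(T^*)$ appearing in the statement; this holds because the forgetful map $\pi$ does not change the underlying tree shape, hence does not change the number of descendant leaves at any node, so the multiset of numerical splits is preserved (and in particular is independent of which representative $T$ of the fiber we chose — consistent with the left-hand side depending only on $T^*$). There is no real obstacle here: the result is an immediate algebraic consequence of the two cited facts, and the only care required is the (routine) observation about invariance of $NS$ under $\pi$. I would present it in two or three lines.
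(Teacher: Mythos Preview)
Your proposal is correct and follows precisely the approach the paper indicates: the paper simply states that the corollary is obtained by ``Combining Proposition \ref{prop:main} and equation (\ref{s->d}),'' and your argument spells out exactly that combination, including the routine observation that $NS(T)=NS(T^*)$ since $\pi$ only forgets labels.
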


This formula does not agree, either, with the one given in \cite[Prop. 29]{Ford1}: the difference lies again in the same factor of 2 to the power of the number of internal nodes that are not symmetric branch points but whose children have the same number of descendant leaves.

\begin{figure}[htb]
\begin{center}
\begin{tikzpicture}[thick,>=stealth,scale=0.3]
\draw(2,0) node [tre] (2) {};  
\draw(4,0) node [tre] (3) {};  
\draw(6,0) node [tre] (4) {};  
\draw(8,0) node [tre] (5) {};  
\draw(10,0) node [tre] (6) {};  
\draw(12,0) node [tre] (7) {};  
\draw(11,1) node[tre] (a) {};
\draw(10,2) node[tre] (b) {};
\draw(5,1) node[tre] (c) {};
\draw(8,4) node[tre] (d) {};
\draw(7,5) node[tre] (e) {};
\draw  (e)--(2);
\draw  (e)--(d);
\draw  (d)--(b);
\draw  (d)--(c);
\draw  (c)--(3);
\draw  (c)--(4);
\draw  (b)--(5);
\draw  (b)--(a);
\draw  (a)--(6);
\draw  (a)--(7);
\end{tikzpicture}\qquad
\begin{tikzpicture}[thick,>=stealth,scale=0.3]
\draw(2,0) node [tre] (2) {};  
\draw(4,0) node [tre] (3) {};  
\draw(6,0) node [tre] (4) {};  
\draw(8,0) node [tre] (5) {};  
\draw(10,0) node [tre] (6) {};  
\draw(12,0) node [tre] (7) {};  
\draw(11,1) node[tre] (a) {};
\draw(10,2) node[tre] (b) {};
\draw(9,3) node[tre] (c) {};
\draw(3,1) node[tre] (d) {};
\draw(7,5) node[tre] (e) {};
\draw  (e)--(c);
\draw  (e)--(d);
\draw  (d)--(2);
\draw  (d)--(3);
\draw  (c)--(b);
\draw  (c)--(4);
\draw  (b)--(5);
\draw  (b)--(a);
\draw  (a)--(6);
\draw  (a)--(7);
\end{tikzpicture}
\caption{The trees with Newick strings $(*,((*,*),(*,(*,*))));$ (left) and $((*,*),(*,(*,(*,*))));$ (right). }
\label{tree2}
\end{center}
\end{figure}
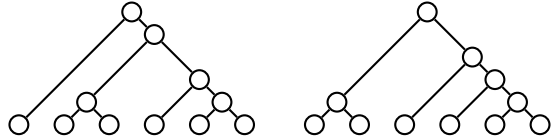

\begin{remark}
Against what is stated in \cite{Ford1}, $P^{(*)}_{\alpha,n}$ does not satisfy any Markov branching recurrence. Indeed, let $T_m,T_m'\in \BT^*_m$ be any  two different  trees with the same numbers $m$ of leaves and $k$ of symmetric branch points:
for instance, the trees in $\BT^*_6$ with Newick strings $(*,((*,*),(*,(*,*))));$ and $((*,*),(*,(*,(*,*))));$ depicted in Fig. \ref{tree2}.
Then, 
$$
\begin{array}{l}
\displaystyle P_{\alpha,m}^{(*)}(T_m)=\frac{2^{m-k-1}}{\Gamma_{\alpha}(m)}\prod_{(a,b)\in NS(T_m)}\varphi_{\alpha}(a,b)\\[2ex]
\displaystyle P_{\alpha,m}^{(*)}(T'_m)=\frac{2^{m-k-1}}{\Gamma_{\alpha}(m)}\prod_{(a,b)\in NS(T'_m)}\varphi_{\alpha}(a,b).
\end{array}
$$
In this case, $T_m\star T_m\in \BT_{2m}^*$ has $2k+1$ symmetric branch  points and therefore
$$
\begin{array}{l}
\displaystyle P_{\alpha,m}^{(*)}(T_m\star T_m) =\frac{2^{2m-2k-2}}{\Gamma_{\alpha}(2m)}\prod_{(a,b)\in NS(T_m\star T_m)} \varphi_{\alpha}(a,b)\\
\qquad\qquad \displaystyle=\frac{2^{2m-2k-2}}{\Gamma_{\alpha}(2m)}\varphi_{\alpha}(m,m)\Big(\prod_{(a,b)\in NS(T_m)} \varphi_{\alpha}(a,b)\Big)^2\\
\qquad\qquad \displaystyle= \frac{2^{2m-2k-2}}{\Gamma_{\alpha}(2m)}\varphi_{\alpha}(m,m)\Big(\frac{\Gamma_{\alpha}(m)}{2^{m-k-1}}P_{\alpha,m}^{(*)}(T_m)\Big)^2\\
\qquad\qquad \displaystyle= q_{\alpha}(m,m)P_{\alpha,m}^{(*)}(T_m)P_{\alpha,m}^{(*)}(T_m)
\end{array}
$$
while $T_m\star T'_m\in \BT_{2m}^*$ has $2k$ symmetric branch  points and therefore
$$
\begin{array}{l}
\displaystyle P_{\alpha,m}^{(*)}(T_m\star T'_m)  =\frac{2^{2m-2k-1}}{\Gamma_{\alpha}(2m)}\prod_{(a,b)\in NS(T_m\star T'_m)} \varphi_{\alpha}(a,b)\\
\qquad \displaystyle=\frac{2^{2m-2k-1}}{\Gamma_{\alpha}(2m)}\varphi_{\alpha}(m,m)\Big(\prod_{(a,b)\in NS(T_m)} \varphi_{\alpha}(a,b)\Big)\Big(\prod_{(a,b)\in NS(T'_m)} \varphi_{\alpha}(a,b)\Big)\\
\qquad \displaystyle= \frac{2^{2m-2k-1}}{\Gamma_{\alpha}(2m)}\varphi_{\alpha}(m,m)\cdot \frac{\Gamma_{\alpha}(m)}{2^{m-k-1}}P_{\alpha,m}^{(*)}(T_m)\cdot \frac{\Gamma_{\alpha}(m)}{2^{m-k-1}}P_{\alpha,m}^{(*)}(T'_m)\\
\qquad \displaystyle= 2q_{\alpha}(m,m)P_{\alpha,m}^{(*)}(T_m)P_{\alpha,m}^{(*)}(T'_m)
\end{array}
$$
and $q_{\alpha}(m,m)\neq 2q_{\alpha}(m,m)$. 
\end{remark}
 
\noindent\textbf{Acknowledgments}.
This research was supported by Spanish Ministry of Economy and Competitiveness and European Regional Development Fund project DPI2015-67082-P (MINECO/FEDER). We thank G. Cardona and G. Riera for several comments on the SageMath module companion to the paper.

\end{document}